\newcommand{\safemath}[2]{\newcommand{#1}{\ensuremath{#2}\xspace}}
\safemath{\bma}{\mathbf{a}}
\safemath{\bmb}{\mathbf{b}}
\safemath{\bmc}{\mathbf{c}}
\safemath{\bmd}{\mathbf{d}}
\safemath{\bme}{\mathbf{e}}
\safemath{\bmf}{\mathbf{f}}
\safemath{\bmg}{\mathbf{g}}
\safemath{\bmh}{\mathbf{h}}
\safemath{\bmi}{\mathbf{i}}
\safemath{\bmj}{\mathbf{j}}
\safemath{\bmk}{\mathbf{k}}
\safemath{\bml}{\mathbf{l}}
\safemath{\bmm}{\mathbf{m}}
\safemath{\bmn}{\mathbf{n}}
\safemath{\bmo}{\mathbf{o}}
\safemath{\bmp}{\mathbf{p}}
\safemath{\bmq}{\mathbf{q}}
\safemath{\bmr}{\mathbf{r}}
\safemath{\bms}{\mathbf{s}}
\safemath{\bmt}{\mathbf{t}}
\safemath{\bmu}{\mathbf{u}}
\safemath{\bmv}{\mathbf{v}}
\safemath{\bmw}{\mathbf{w}}
\safemath{\bmx}{\mathbf{x}}
\safemath{\bmy}{\mathbf{y}}
\safemath{\bmz}{\mathbf{z}}
\safemath{\bmzero}{\mathbf{0}}
\safemath{\bmone}{\mathbf{1}}
\bmdefine{\biad}{a}
\bmdefine{\bibd}{b}
\bmdefine{\bicd}{c}
\bmdefine{\bidd}{d}
\bmdefine{\bied}{e}
\bmdefine{\bifd}{f}
\bmdefine{\bigd}{g}
\bmdefine{\bihd}{h}
\bmdefine{\biid}{i}
\bmdefine{\bijd}{j}
\bmdefine{\bikd}{k}
\bmdefine{\bild}{l}
\bmdefine{\bimd}{m}
\bmdefine{\bind}{n}
\bmdefine{\biod}{o}
\bmdefine{\bipd}{p}
\bmdefine{\biqd}{q}
\bmdefine{\bird}{r}
\bmdefine{\bisd}{s}
\bmdefine{\bitd}{t}
\bmdefine{\biud}{u}
\bmdefine{\bivd}{v}
\bmdefine{\biwd}{w}
\bmdefine{\bixd}{x}
\bmdefine{\biyd}{y}
\bmdefine{\bizd}{z}
\bmdefine{\bixid}{\xi}
\bmdefine{\bilambdad}{\lambda}
\bmdefine{\bimud}{\mu}
\bmdefine{\bithetad}{\theta}
\bmdefine{\biphid}{\phi}
\bmdefine{\bideltad}{\delta}
\safemath{\bmia}{\biad}
\safemath{\bmib}{\bibd}
\safemath{\bmic}{\bicd}
\safemath{\bmid}{\bidd}
\safemath{\bmie}{\bied}
\safemath{\bmif}{\bifd}
\safemath{\bmig}{\bigd}
\safemath{\bmih}{\bihd}
\safemath{\bmii}{\biid}
\safemath{\bmij}{\bijd}
\safemath{\bmik}{\bikd}
\safemath{\bmil}{\bild}
\safemath{\bmim}{\bimd}
\safemath{\bmin}{\bind}
\safemath{\bmio}{\biod}
\safemath{\bmip}{\bipd}
\safemath{\bmiq}{\biqd}
\safemath{\bmir}{\bird}
\safemath{\bmis}{\bisd}
\safemath{\bmit}{\bitd}
\safemath{\bmiu}{\biud}
\safemath{\bmiv}{\bivd}
\safemath{\bmiw}{\biwd}
\safemath{\bmix}{\bixd}
\safemath{\bmiy}{\biyd}
\safemath{\bmiz}{\bizd}
\safemath{\bmxi}{\bixid}
\safemath{\bmlambda}{\bilambdad}
\safemath{\bmmu}{\bimud}
\safemath{\bmtheta}{\bithetad}
\safemath{\bmphi}{\biphid}
\safemath{\bmdelta}{\bideltad}
\safemath{\bA}{\mathbf{A}}
\safemath{\bB}{\mathbf{B}}
\safemath{\bC}{\mathbf{C}}
\safemath{\bD}{\mathbf{D}}
\safemath{\bE}{\mathbf{E}}
\safemath{\bF}{\mathbf{F}}
\safemath{\bG}{\mathbf{G}}
\safemath{\bH}{\mathbf{H}}
\safemath{\bI}{\mathbf{I}}
\safemath{\bJ}{\mathbf{J}}
\safemath{\bK}{\mathbf{K}}
\safemath{\bL}{\mathbf{L}}
\safemath{\bM}{\mathbf{M}}
\safemath{\bN}{\mathbf{N}}
\safemath{\bO}{\mathbf{O}}
\safemath{\bP}{\mathbf{P}}
\safemath{\bQ}{\mathbf{Q}}
\safemath{\bR}{\mathbf{R}}
\safemath{\bS}{\mathbf{S}}
\safemath{\bT}{\mathbf{T}}
\safemath{\bU}{\mathbf{U}}
\safemath{\bV}{\mathbf{V}}
\safemath{\bW}{\mathbf{W}}
\safemath{\bX}{\mathbf{X}}
\safemath{\bY}{\mathbf{Y}}
\safemath{\bZ}{\mathbf{Z}}
\safemath{\bZero}{\mathbf{0}}
\safemath{\bOne}{\mathbf{1}}
\safemath{\bDelta}{\mathbf{\Delta}}
\safemath{\bLambda}{\mathbf{\UpLambda}}
\safemath{\bPhi}{\mathbf{\Upphi}}
\safemath{\bSigma}{\mathbf{\Upsigma}}
\safemath{\bOmega}{\mathbf{\Upomega}}
\safemath{\bTheta}{\mathbf{\Uptheta}}
\bmdefine{\biAd}{A}
\bmdefine{\biBd}{B}
\bmdefine{\biCd}{C}
\bmdefine{\biDd}{D}
\bmdefine{\biEd}{E}
\bmdefine{\biFd}{F}
\bmdefine{\biGd}{G}
\bmdefine{\biHd}{H}
\bmdefine{\biId}{I}
\bmdefine{\biJd}{J}
\bmdefine{\biKd}{K}
\bmdefine{\biLd}{L}
\bmdefine{\biMd}{M}
\bmdefine{\biOd}{N}
\bmdefine{\biPd}{O}
\bmdefine{\biQd}{P}
\bmdefine{\biRd}{R}
\bmdefine{\biSd}{S}
\bmdefine{\biTd}{T}
\bmdefine{\biUd}{U}
\bmdefine{\biVd}{V}
\bmdefine{\biWd}{W}
\bmdefine{\biXd}{X}
\bmdefine{\biYd}{Y}
\bmdefine{\biZd}{Z}
\bmdefine{\biDelta}{\Delta}
\bmdefine{\biLambda}{\Lambda}
\bmdefine{\biPhi}{\Phi}
\bmdefine{\biSigma}{\Sigma}
\bmdefine{\biOmega}{\Omega}
\bmdefine{\biTheta}{\Theta}
\safemath{\bimA}{\biAd}
\safemath{\bimB}{\biBd}
\safemath{\bimC}{\biCd}
\safemath{\bimD}{\biDd}
\safemath{\bimE}{\biEd}
\safemath{\bimF}{\biFd}
\safemath{\bimG}{\biGd}
\safemath{\bimH}{\biHd}
\safemath{\bimI}{\biId}
\safemath{\bimJ}{\biJd}
\safemath{\bimK}{\biKd}
\safemath{\bimL}{\biLd}
\safemath{\bimM}{\biMd}
\safemath{\bimN}{\biNd}
\safemath{\bimO}{\biOd}
\safemath{\bimP}{\biPd}
\safemath{\bimQ}{\biQd}
\safemath{\bimR}{\biRd}
\safemath{\bimS}{\biSd}
\safemath{\bimT}{\biTd}
\safemath{\bimU}{\biUd}
\safemath{\bimV}{\biVd}
\safemath{\bimW}{\biWd}
\safemath{\bimX}{\biXd}
\safemath{\bimY}{\biYd}
\safemath{\bimZ}{\biZd}
\safemath{\bimDelta}{\biDelta}
\safemath{\bimLambda}{\biLambda}
\safemath{\bimPhi}{\biPhi}
\safemath{\bimSigma}{\biSigma}
\safemath{\bimOmega}{\biOmega}
\safemath{\bimTheta}{\biTheta}
\safemath{\setA}{\mathcal{A}}
\safemath{\setB}{\mathcal{B}}
\safemath{\setC}{\mathcal{C}}
\safemath{\setD}{\mathcal{D}}
\safemath{\setE}{\mathcal{E}}
\safemath{\setF}{\mathcal{F}}
\safemath{\setG}{\mathcal{G}}
\safemath{\setH}{\mathcal{H}}
\safemath{\setI}{\mathcal{I}}
\safemath{\setJ}{\mathcal{J}}
\safemath{\setK}{\mathcal{K}}
\safemath{\setL}{\mathcal{L}}
\safemath{\setM}{\mathcal{M}}
\safemath{\setN}{\mathcal{N}}
\safemath{\setO}{\mathcal{O}}
\safemath{\setP}{\mathcal{P}}
\safemath{\setQ}{\mathcal{Q}}
\safemath{\setR}{\mathcal{R}}
\safemath{\setS}{\mathcal{S}}
\safemath{\setT}{\mathcal{T}}
\safemath{\setU}{\mathcal{U}}
\safemath{\setV}{\mathcal{V}}
\safemath{\setW}{\mathcal{W}}
\safemath{\setX}{\mathcal{X}}
\safemath{\setY}{\mathcal{Y}}
\safemath{\setZ}{\mathcal{Z}}
\safemath{\emptySet}{\varnothing}
\safemath{\colA}{\mathscr{A}}
\safemath{\colB}{\mathscr{B}}
\safemath{\colC}{\mathscr{C}}
\safemath{\colD}{\mathscr{D}}
\safemath{\colE}{\mathscr{E}}
\safemath{\colF}{\mathscr{F}}
\safemath{\colG}{\mathscr{G}}
\safemath{\colH}{\mathscr{H}}
\safemath{\colI}{\mathscr{I}}
\safemath{\colJ}{\mathscr{J}}
\safemath{\colK}{\mathscr{K}}
\safemath{\colL}{\mathscr{L}}
\safemath{\colM}{\mathscr{M}}
\safemath{\colN}{\mathscr{N}}
\safemath{\colO}{\mathscr{O}}
\safemath{\colP}{\mathscr{P}}
\safemath{\colQ}{\mathscr{Q}}
\safemath{\colR}{\mathscr{R}}
\safemath{\colS}{\mathscr{S}}
\safemath{\colT}{\mathscr{T}}
\safemath{\colU}{\mathscr{U}}
\safemath{\colV}{\mathscr{V}}
\safemath{\colW}{\mathscr{W}}
\safemath{\colX}{\mathscr{X}}
\safemath{\colY}{\mathscr{Y}}
\safemath{\colZ}{\mathscr{Z}}
\safemath{\opA}{\mathbb{A}}
\safemath{\opB}{\mathbb{B}}
\safemath{\opC}{\mathbb{C}}
\safemath{\opD}{\mathbb{D}}
\safemath{\opE}{\mathbb{E}}
\safemath{\opF}{\mathbb{F}}
\safemath{\opG}{\mathbb{G}}
\safemath{\opH}{\mathbb{H}}
\safemath{\opI}{\mathbb{I}}
\safemath{\opJ}{\mathbb{J}}
\safemath{\opK}{\mathbb{K}}
\safemath{\opL}{\mathbb{L}}
\safemath{\opM}{\mathbb{M}}
\safemath{\opN}{\mathbb{N}}
\safemath{\opO}{\mathbb{O}}
\safemath{\opP}{\mathbb{P}}
\safemath{\opQ}{\mathbb{Q}}
\safemath{\opR}{\mathbb{R}}
\safemath{\opS}{\mathbb{S}}
\safemath{\opT}{\mathbb{T}}
\safemath{\opU}{\mathbb{U}}
\safemath{\opV}{\mathbb{V}}
\safemath{\opW}{\mathbb{W}}
\safemath{\opX}{\mathbb{X}}
\safemath{\opY}{\mathbb{Y}}
\safemath{\opZ}{\mathbb{Z}}
\safemath{\opZero}{\mathbb{O}}
\safemath{\identityop}{\opI}
\safemath{\veca}{\bma}
\safemath{\vecb}{\bmb}
\safemath{\vecc}{\bmc}
\safemath{\vecd}{\bmd}
\safemath{\vece}{\bme}
\safemath{\vecf}{\bmf}
\safemath{\vecg}{\bmg}
\safemath{\vech}{\bmh}
\safemath{\veci}{\bmi}
\safemath{\vecj}{\bmj}
\safemath{\veck}{\bmk}
\safemath{\vecl}{\bml}
\safemath{\vecm}{\bmm}
\safemath{\vecn}{\bmn}
\safemath{\veco}{\bmo}
\safemath{\vecp}{\bmp}
\safemath{\vecq}{\bmq}
\safemath{\vecr}{\bmr}
\safemath{\vecs}{\bms}
\safemath{\vect}{\bmt}
\safemath{\vecu}{\bmu}
\safemath{\vecv}{\bmv}
\safemath{\vecw}{\bmw}
\safemath{\vecx}{\bmx}
\safemath{\vecy}{\bmy}
\safemath{\vecz}{\bmz}
\safemath{\veczero}{\bmzero}
\safemath{\vecone}{\bmone}
\safemath{\vecxi}{\bmxi}
\safemath{\veclambda}{\bmlambda}
\safemath{\vecmu}{\bmmu}
\safemath{\vectheta}{\bmtheta}
\safemath{\vecphi}{\bmphi}
\safemath{\vecdelta}{\bmdelta}
\safemath{\matA}{\bA}
\safemath{\matB}{\bB}
\safemath{\matC}{\bC}
\safemath{\matD}{\bD}
\safemath{\matE}{\bE}
\safemath{\matF}{\bF}
\safemath{\matG}{\bG}
\safemath{\matH}{\bH}
\safemath{\matI}{\bI}
\safemath{\matJ}{\bJ}
\safemath{\matK}{\bK}
\safemath{\matL}{\bL}
\safemath{\matM}{\bM}
\safemath{\matN}{\bN}
\safemath{\matO}{\bO}
\safemath{\matP}{\bP}
\safemath{\matQ}{\bQ}
\safemath{\matR}{\bR}
\safemath{\matS}{\bS}
\safemath{\matT}{\bT}
\safemath{\matU}{\bU}
\safemath{\matV}{\bV}
\safemath{\matW}{\bW}
\safemath{\matX}{\bX}
\safemath{\matY}{\bY}
\safemath{\matZ}{\bZ}
\safemath{\matzero}{\bmzero}
\safemath{\matDelta}{\bDelta}
\safemath{\matLambda}{\bLambda}
\safemath{\matPhi}{\bPhi}
\safemath{\matSigma}{\bSigma}
\safemath{\matOmega}{\bOmega}
\safemath{\matTheta}{\bTheta}
\safemath{\matidentity}{\matI}
\safemath{\matone}{\matO}
\safemath{\rnda}{A}
\safemath{\rndb}{B}
\safemath{\rndc}{C}
\safemath{\rndd}{D}
\safemath{\rnde}{E}
\safemath{\rndf}{F}
\safemath{\rndg}{G}
\safemath{\rndh}{H}
\safemath{\rndi}{I}
\safemath{\rndj}{J}
\safemath{\rndk}{K}
\safemath{\rndl}{L}
\safemath{\rndm}{M}
\safemath{\rndn}{N}
\safemath{\rndo}{O}
\safemath{\rndp}{P}
\safemath{\rndq}{Q}
\safemath{\rndr}{R}
\safemath{\rnds}{S}
\safemath{\rndt}{T}
\safemath{\rndu}{U}
\safemath{\rndv}{V}
\safemath{\rndw}{W}
\safemath{\rndx}{X}
\safemath{\rndy}{Y}
\safemath{\rndz}{Z}
\safemath{\rveca}{\bimA}
\safemath{\rvecb}{\bimB}
\safemath{\rvecc}{\bimC}
\safemath{\rvecd}{\bimD}
\safemath{\rvece}{\bimE}
\safemath{\rvecf}{\bimF}
\safemath{\rvecg}{\bimG}
\safemath{\rvech}{\bimH}
\safemath{\rveci}{\bimI}
\safemath{\rvecj}{\bimJ}
\safemath{\rveck}{\bimK}
\safemath{\rvecl}{\bimL}
\safemath{\rvecm}{\bimM}
\safemath{\rvecn}{\bimN}
\safemath{\rveco}{\bomO}
\safemath{\rvecp}{\bimP}
\safemath{\rvecq}{\bimQ}
\safemath{\rvecr}{\bimR}
\safemath{\rvecs}{\bimS}
\safemath{\rvect}{\bimT}
\safemath{\rvecu}{\bimU}
\safemath{\rvecv}{\bimV}
\safemath{\rvecw}{\bimW}
\safemath{\rvecx}{\bimX}
\safemath{\rvecy}{\bimY}
\safemath{\rvecz}{\bimZ}
\safemath{\rvecxi}{\bmxi}
\safemath{\rveclambda}{\bmlambda}
\safemath{\rvecmu}{\bmmu}
\safemath{\rvectheta}{\bmtheta}
\safemath{\rvecphi}{\bmphi}
\safemath{\rmatA}{\bimA}
\safemath{\rmatB}{\bimB}
\safemath{\rmatC}{\bimC}
\safemath{\rmatD}{\bimD}
\safemath{\rmatE}{\bimE}
\safemath{\rmatF}{\bimF}
\safemath{\rmatG}{\bimG}
\safemath{\rmatH}{\bimH}
\safemath{\rmatI}{\bimI}
\safemath{\rmatJ}{\bimJ}
\safemath{\rmatK}{\bimK}
\safemath{\rmatL}{\bimL}
\safemath{\rmatM}{\bimM}
\safemath{\rmatN}{\bimN}
\safemath{\rmatO}{\bimO}
\safemath{\rmatP}{\bimP}
\safemath{\rmatQ}{\bimQ}
\safemath{\rmatR}{\bimR}
\safemath{\rmatS}{\bimS}
\safemath{\rmatT}{\bimT}
\safemath{\rmatU}{\bimU}
\safemath{\rmatV}{\bimV}
\safemath{\rmatW}{\bimW}
\safemath{\rmatX}{\bimX}
\safemath{\rmatY}{\bimY}
\safemath{\rmatZ}{\bimZ}
\safemath{\rmatDelta}{\bimDelta}
\safemath{\rmatLambda}{\bimLambda}
\safemath{\rmatPhi}{\bimPhi}
\safemath{\rmatSigma}{\bimSigma}
\safemath{\rmatOmega}{\bimOmega}
\safemath{\rmatTheta}{\bimTheta}
\newenvironment{textbmatrix}{	\setlength{\arraycolsep}{2.5pt}%
								\big[\begin{matrix}}{\end{matrix}\big]%
								\raisebox{0.08ex}{\vphantom{M}}}
\def\be{\begin{equation}}
\def\ee{\end{equation}}
\def\een{\nonumber \end{equation}}
\def\mat{\begin{bmatrix}}
\def\emat{\end{bmatrix}}
\def\btm{\begin{textbmatrix}}
\def\etm{\end{textbmatrix}}
\def\ba#1\ea{\begin{align}#1\end{align}}
\def\bas#1\eas{\begin{align*}#1\end{align*}}
\def\bs#1\es{\begin{split}#1\end{split}}
\def\bg#1\eg{\begin{gather}#1\end{gather}}
\def\bml#1\eml{\begin{multline}#1\end{multline}}
\def\bi#1\ei{\begin{itemize}#1\end{itemize}}
\newcommand{\lefto}{\mathopen{}\left}
\DeclareMathOperator*{\argmin}{arg\;min}		% arg min
\DeclareMathOperator*{\argmax}{arg\;max}		% arg max
\newcommand{\Ex}[2]{\ensuremath{\Exop_{#1}\lefto[#2\right]}} 	% expectation
\newcommand{\vecnorm}[1]{\lefto\lVert#1\right\rVert}		% vector norm
\safemath{\dirac}{\delta}					% Dirac delta
\safemath{\krond}{\dirac}					% Kronecker delta
\safemath{\upto}{\uparrow}
\safemath{\downto}{\downarrow}
\safemath{\iu}{j}							% imaginary unit
\safemath{\ev}{\lambda}						% eigenvalue
\safemath{\hilseqspace}{l^{2}}				% Hilbert sequence space
\newcommand{\banachfunspace}[1]{\setL^{#1}}	% Banach function space
\safemath{\hilfunspace}{\banachfunspace{2}}	% Hilbert function space
\safemath{\SNR}{\textit{SNR}} 				% signal to noise ratio
\safemath{\PAR}{\textit{PAR}} 				% signal to noise ratio
\safemath{\No}{N_0}							% noise spectral density
\safemath{\Es}{E_s}							% energy per symbol
\safemath{\Eb}{E_b}							% energy per bit
\safemath{\EbNo}{\frac{\Eb}{\No}}
\safemath{\EsNo}{\frac{\Es}{\No}}
\DeclareMathOperator{\CHop}{\ensuremath{\opH}} % channel operator
\safemath{\tvir}{\rndh_{\CHop}}				% time-varying impulse response
\safemath{\tvtf}{\rndl_{\CHop}}				% 	-''- transfer function
\safemath{\spf}{\rnds_{\CHop}}				% spreading function
\safemath{\bff}{H_{\CHop}}					% bi-freuqency function
\safemath{\ircf}{r_{h}}						% impulse response correlation fn.
\safemath{\tftvcf}{r_{s}}					% scattering function
\safemath{\tfcf}{r_{l}}						% time-frequency correlation fn.
\safemath{\bfcf}{r_{H}}						% bi-frequency correlation fn.
\safemath{\tcorr}{c_h}						% time-correlation function
\safemath{\scf}{c_{s}}						% spreading function
\safemath{\tfcorr}{c_{l}}					% transfer-function correlation
\safemath{\fcorr}{c_{H}}						% frequency-correlation function
\safemath{\mi}{I}							% mutual information
\safemath{\capacity}{C}						% capacity
\safemath{\normal}{\mathcal{N}}			% normal distribution
\safemath{\jpg}{\mathcal{CN}}			% jointly proper Gaussian
\safemath{\mchain}{\leftrightarrow}		% Markov chain
\safemath{\dB}{\,\mathrm{dB}}
\safemath{\dBm}{\,\mathrm{dBm}}
\safemath{\Hz}{\,\mathrm{Hz}}
\safemath{\kHz}{\,\mathrm{kHz}}
\safemath{\MHz}{\,\mathrm{MHz}}
\safemath{\GHz}{\,\mathrm{GHz}}
\safemath{\s}{\,\mathrm{s}}
\safemath{\ms}{\,\mathrm{ms}}
\safemath{\mus}{\,\mathrm{\text{\textmu}s}}
\safemath{\ns}{\,\mathrm{ns}}
\safemath{\ps}{\,\mathrm{ps}}
\safemath{\meter}{\,\mathrm{m}}
\safemath{\mm}{\,\mathrm{mm}}
\safemath{\cm}{\,\mathrm{cm}}
\safemath{\m}{\,\mathrm{m}}
\safemath{\W}{\,\mathrm{W}}
\safemath{\mW}{\, \mathrm{mW}}
\safemath{\J}{\,\mathrm{J}}
\safemath{\K}{\,\mathrm{K}}
\safemath{\bit}{\,\mathrm{bit}}
\safemath{\nat}{\,\mathrm{nat}}
\safemath{\define}{\triangleq}			% definition
\safemath{\equivalent}{\sim}
\safemath{\distas}{\sim}					% distributed according to
\safemath{\sdiff}{\Delta}				% symmetric set difference
\safemath{\reals}{\mathbb{R}}
\safemath{\positivereals}{\reals_{+}}
\safemath{\integers}{\mathbb{Z}}
\safemath{\posint}{\integers_{+}}
\safemath{\naturals}{\mathbb{N}}
\safemath{\posnaturals}{\naturals_{+}}
\safemath{\complexset}{\mathbb{C}}
\safemath{\rationals}{\mathbb{Q}}
\newcommand*{\fancyrefapplabelprefix}{app}		% Appendix
\newcommand*{\fancyrefthmlabelprefix}{thm}		% Theorem
\newcommand*{\fancyreflemlabelprefix}{lem}		% Lemma
\newcommand*{\fancyrefcorlabelprefix}{cor}		% Corollary
\newcommand*{\fancyrefdeflabelprefix}{def}		% Definition
\newcommand*{\fancyrefproplabelprefix}{prop}		% Proposition
\newcommand*{\fancyrefexmpllabelprefix}{exmpl}
\newcommand*{\fancyrefalglabelprefix}{alg}		% Algorithm
\newcommand*{\fancyreftbllabelprefix}{tbl}		% Algorithm
 \newtheorem{thm}{Theorem}
 \newtheorem{lem}[thm]{Lemma}
\safemath{\dictab}{[\,\dicta\,\,\dictb\,]}
\safemath{\ysig}{\bmy}
\safemath{\ysighat}{\hat{\ysig}}
\safemath{\ysigdim}{M}
\safemath{\xsig}{\bmx}
\safemath{\xsigdim}{N}
\safemath{\nx}{n_x}
\safemath{\zsig}{\bmz}
\safemath{\zsigdim}{\ysigdim}
\safemath{\rsig}{\bmr}
\safemath{\Adict}{\bA}
\safemath{\Adicttilde}{\widetilde{\Adict}}
\safemath{\Adictdim}{\outputdim\times\xsigdim}
\safemath{\avec}{\bma}
\safemath{\avectilde}{\tilde{\avec}}
\safemath{\Bdict}{\bB}
\safemath{\Bdicttilde}{\widetilde{\Bdict}}
\safemath{\Cdict}{\bC}
\safemath{\cvec}{\bmc}
\safemath{\Ddict}{\bD}
\safemath{\Ddictdim}{\ysigdim\times\xsigdim}
\safemath{\dvec}{\bmd}
\safemath{\Ddicttilde}{\widetilde{\bD}}
\safemath{\Bonb}{\bB}
\safemath{\bvec}{\bmb}
\safemath{\Bonbdim}{\ysigdim\times\ysigdim}
\safemath{\noise}{\bmn}
\safemath{\noisedim}{\ysigim}
\safemath{\err}{\bme}
\safemath{\errdim}{\ysigdim}
\safemath{\errset}{\setE}
\safemath{\nerr}{n_e}
\safemath{\delop}{\bP_\errset}
\safemath{\delopc}{\bP_{{\errset}^c}}
\safemath{\cplxi}{\imath}
\safemath{\cplxj}{\jmath}
\safemath{\dict}{\matD}
\safemath{\inputdim}{N}		% number of columns of dictionary D
\safemath{\outputdim}{M}		%number of rows of dictionary D
\safemath{\sparsity}{S}	%sparsity
\safemath{\inputdimA}{{N_a}}	%total number of elements in dictionary A
\safemath{\inputdimB}{{N_b}}	%total number of elements in dictionary B
\safemath{\elemA}{{n_a}}	%number of elements chosen from dictionary A
\safemath{\elemB}{{n_b}}	%number of elements chosen from dictionary B
\safemath{\resA}{\matR_a}	%restriction map to elements of dictionary A
\safemath{\resB}{\matR_b}	%restriction map to elements of dictionary B
\safemath{\subD}{\matS} %subdictionary
\safemath{\subA}{\matS_a} %subdictionary part of A
\safemath{\subB}{\matS_b} %subdictionary part of B
\safemath{\dicta}{\matA} 	% first subdictionary
\safemath{\dictb}{\matB} 	% second subdictionary
\safemath{\hollowS}{H}
\safemath{\hollowA}{H_a}
\safemath{\hollowB}{H_b}
\safemath{\cross}{Z}
\safemath{\coh}{\mu_d}			% coherence dictionary
\safemath{\coha}{\mu_a}			% coherence first subdictionary
\safemath{\cohb}{\mu_b}			% coherence second subdictionary
\safemath{\mubs}{\nu}	%block sub-coherence
\safemath{\cohm}{\mu_m} %mutual coherence
\safemath{\dictset}{\setD}	% set of dictionaries
\safemath{\dictsetp}{\dictset(\coh,\coha,\cohb)}	% set of dictionaries parametrized
\safemath{\dictsetgen}{\dictset_\text{gen}}
\safemath{\dictsetgenp}{\dictsetgen(\coh)}
\safemath{\dictsetonb}{\dictset_\text{onb}}
\safemath{\dictsetonbp}{\dictsetonb(\coh)}
\safemath{\leftside}{U}
\safemath{\rightsideA}{R_a}
\safemath{\rightsideB}{R_b}
\safemath{\indexS}{\setI_S} %set of indices participating in sub-dictionary S
\safemath{\na}{n_a}			% cardinality of set of linearly independent columns of first dictionary
\safemath{\nb}{n_b}			% cardinality of set of linearly independent columns of second dictionary
\safemath{\coeffa}{p_i}	%coefficients for columns of A
\safemath{\coeffb}{q_j}	%coefficients for columns of B
\safemath{\seta}{\setP}		% set of linearly independent columns of A
\safemath{\setb}{\setQ}     % set of linearly independent columns of B
\safemath{\setw}{\setW}	%set of n largest elements of w
\safemath{\setz}{\setZ}	%set of L-n largest elements of z
\safemath{\cola}{\veca}		% generic element of the dictionary A
\safemath{\colb}{\vecb}		% generic element of the dictionary B
\safemath{\cold}{\vecd}		% generic element of the dictionary D
\safemath{\inputvec}{\vecx} 	%coefficient vector (input)
\safemath{\error}{\vece}	%error vector
\safemath{\noiseout}{\vecz} 	%noisy output vector
\safemath{\inputvecel}{x}
\safemath{\inputveca}{\vecx_a}
\safemath{\inputvecb}{\vecx_b}
\safemath{\outputvec}{\vecy}	%output of Dictionary
\safemath{\lambdamin}{\lambda_{\mathrm{min}}}
\safemath{\elltwo}{\ell_2}
\safemath{\ellone}{\ell_1}
\safemath{\ellzero}{\ell_0}
\safemath{\ellinf}{\ell_\infty}
\safemath{\ellinftilde}{\ell_{\widetilde\infty}}
\safemath{\licard}{Z(\coh,\coha,\cohb)}
\safemath{\xsol}{\hat{x}}
\safemath{\xbord}{x_b}		%Solution at the border
\safemath{\xstat}{x_s}		%Solution stationary in l0 prob
\safemath{\xstatLone}{\tilde{x}_s}
\safemath{\order}{\mathcal{O}} %order notation (big O)
\safemath{\scales}{\Theta} %scales as
\safemath{\ones}{\mathbf{1}} %all ones matrix
\safemath{\zeroes}{\mathbf{0}} %all zeroes matrix
\safemath{\thlone}{\kappa(\coh,\cohb)} %treshold l1 problem
\safemath{\constoneA}{\delta} %constant in l1 theorem to save space
\safemath{\constoneB}{\epsilon} %constant in l1 theorem to save space
\safemath{\nlarge}{L}				   %num large elements
\safemath{\sumlarge}{S_\nlarge}
\safemath{\maxlarger}{P_\nlarge}	   % maximum in Gribonval and Nielsen
\safemath{\Pzero}{\textrm{P0}}	
\safemath{\Pone}{\textrm{P1}}
\safemath{\vecfir}{\vecw}			 % \vecv element of the kernel of the dictionary, \vecv=[\vecfir \vecsec]
\safemath{\vecsec}{\vecz}
\safemath{\elvecfir}{w}              % element of vecfir
\safemath{\elvecsec}{z}				 % element of vecsec
\safemath{\nlargefir}{n}
\safemath{\normout}{\gamma}
\safemath{\auxfun}{h}
\safemath{\supp}{\textrm{supp}}%support
\safemath{\indexa}{\ell}
\safemath{\indexb}{r}
\safemath{\indexc}{i}
\safemath{\indexd}{j}
\safemath{\project}{P}%projector
\newtheorem{alg}{Algorithm}
\newtheorem{theorem}{Theorem}
\newcommand{\ejW}[1]{e^{j\Omega #1}}
\newcommand{\proj}[2]{\mathcal{P}_{#1}\!\left(#2\right)}
\def\td{\mathrm{d}}
\newcommand{\gint}[1]{\hat{g}\!\left(#1\right)}
\newcommand{\SVD}[1]{\textit{SVD}\!\left(#1\right)}
\def\bSigma{\mathbf{\Sigma}}
\def\CP{\mathrm{CP}}
\def\Unif{\mathrm{Unif}}
\newcommand{\pder}[2]{\frac{\partial#1}{\partial#2} }
\newcommand{\leqnomode}{\tagsleft@true}
\newcommand{\reqnomode}{\tagsleft@false}
\title{Optimality of the Discrete Fourier Transform for Beamspace Massive MU-MIMO  Communication}
\author{\IEEEauthorblockN{Sueda Taner$^\text{1}$ and Christoph Studer$^\text{2}$}\\[0.02cm]
\IEEEauthorblockA{\small $^\text{1}$School of Electrical and Computer Engineering, Cornell University, Ithaca, NY;  e-mail: st939@cornell.edu\\[0.02cm]
$^\text{2}$Department of Information Technology and Electrical Engineering, ETH Z\"urich, Switzerland; e-mail: studer@ethz.ch}\\[-0.8cm]
\thanks{The work of ST and CS was supported by ComSenTer, one of six centers in JUMP, an SRC program sponsored by DARPA. The work of CS was also supported by an ETH Research Grant  and by the US National Science Foundation (NSF) under grants CNS-1717559 and ECCS-1824379.}
\thanks{The authors would like to thank M.~Gauger, M.~Arnold, and S.~ten~Brink for sharing the real-world channel measurements from~\cite{gauger2020massive}.}
}
\begin{document}

\maketitle

% !TEX root = 21ISIT.tex
% DO NOT REMOVE THE ABOVE COMMENT!

\begin{abstract}
Beamspace processing is an emerging technique to reduce baseband complexity in massive multiuser (MU) multiple-input multiple-output (MIMO) communication systems operating at millimeter-wave (mmWave) and terahertz frequencies.  
The high directionality of wave propagation at such high frequencies ensures that only a small number of transmission paths exist between user equipments and basestation (BS). 
In order to resolve the sparse nature of wave propagation, beamspace processing traditionally computes a spatial discrete Fourier transform (DFT) across a uniform linear antenna array at the BS where each DFT output is associated with a specific beam. 
In this paper, we study optimality conditions of the DFT for sparsity-based beamspace processing with idealistic mmWave channel models and realistic channels. To this end, we propose two algorithms that learn unitary beamspace transforms using an $\boldsymbol\ell^{\bf 4}$-norm-based sparsity measure, and we investigate their optimality theoretically and via simulations.
\end{abstract}

%Beamspace processing is an emerging technique to reduce baseband complexity in massive multiuser (MU) multiple-input multiple-output (MIMO) communication systems operating at millimeter-wave (mmWave) and terahertz frequencies. The high directionality of wave propagation at such high frequencies ensures that only a small number of transmission paths exist between user equipments and basestation (BS).  In order to resolve the sparse nature of wave propagation, beamspace processing traditionally computes a spatial discrete Fourier transform (DFT) across a uniform linear antenna array at the BS where each DFT output is associated with a specific beam.  In this paper, we study optimality conditions of the DFT for sparsity-based beamspace processing with idealistic mmWave channel models and realistic channels. To this end, we propose two algorithms that learn unitary beamspace transforms using an l^4-norm-based sparsity measure, and we investigate their optimality theoretically and via simulations.

% !TEX root = 21ISIT.tex
% DO NOT REMOVE THE ABOVE COMMENT!

\section{Introduction} 
\label{sec:intro}

Massive multi-user (MU) multiple-input multiple-output (MIMO) and millimeter-Wave (mmWave) as well as terahertz (THz) communication are key technologies for 5G and future wireless systems~\cite{larsson14a,rappaport13a}. 
Since wave propagation at mmWave and THz carrier frequencies is highly directional and experiences a strong path loss, only a small number of dominant transmission paths between each user equipment (UE) and the basestation (BS) antenna array is typically present~\cite{rappaport13a,rappaport_book}.
Hence, by taking a spatial discrete Fourier transform (DFT) across the antenna array (e.g., a uniform linear array) one can convert the antenna-domain system into the so-called \emph{beamspace} in which mmWave and THz channel vectors are sparsified~\cite{mirfarshbafan19a,alkhateeb14,schniter14,deng18,seyedicassp20}.
As demonstrated in~\cite{gao16bs,chen17bscs,SayeedGLOBECOM,mirfarshbafan19a,abdelghany2019precoding,seyedicassp20,mahdavi20beamspace,ma18}, low-complexity baseband algorithms and hardware architectures for channel estimation, detection, and precoding can be designed by exploiting the sparse nature of channel vectors in the beamspace domain. 
It is, however, an open question whether the DFT is indeed the optimal sparsifying transform for modeled as well as real-world mmWave and THz channels. 

\subsection{Contributions and Prior Art}
We formulate an $\ell^4$-norm-based optimization problem for  a complex-valued stochastic data model that enables us to learn beamspace transforms for  mmWave and THz systems. 
In order to solve this optimization problem, we adapt the real-valued matching, stretching, and projection (MSP) algorithm from~\cite{zhai20complete} to our complex-valued and stochastic data model, and we propose an alternative algorithm based on coordinate ascent (CA) via a sequence of Givens rotations. 
We then prove that the DFT is a stationary point of the MSP algorithm  and locally optimal (in the $\ell^4$-norm sense) for the CA algorithm for free-space line-of-sight (LoS) mmWave channels. %\st{we don't mention this in the main body*}
We then show numerical experiments with synthetic and real-world channel vectors in order to demonstrate that learned transforms are able to  improve the performance of sparsity-exploiting baseband algorithms in real-world massive MU-MIMO systems.

Dictionary learning algorithms have been proposed in~\cite{ding3,xie19,huang19,fernandez18,srivastava_SBL,srivastava_SBL-KF} for sparse channel estimation. 
In contrast to these results, we analyze optimality properties of learned dictionaries for mmWave and THz channels and build upon on a recent problem setup that uses a smooth, $\ell^4$-norm-based sparsity measure with orthogonal dictionaries, as proposed recently in~\cite{zhai20complete,shen20a}.
We note that unitary dictionaries are advantageous as they do not alter the noise statistics. 
We furthermore extend the real-valued framework and analysis in~\cite{zhai20complete,shen20a} to the complex case and study optimality conditions of the DFT for beamspace processing in mmWave and THz systems. 
A related approach to our complex-valued unitary dictionary learning setting has been employed recently in~\cite{xue20blind}, which proposes an $\ell^3$-norm-based blind data detection algorithm. In contrast to this result, we study the more general problem of learning a fixed beamspace transform via $\ell^4$-norm maximization that is applicable to any channel realization from the given distribution. 
%\st{we should say the unitary constraint makes sense e.g. noise amplification}
%
 
\subsection{Notation}
Bold lowercase and uppercase letters represent column vectors and matrices, respectively. %
We use $a_k$ for the $k$th entry of $\veca$, $A_{i,k}$ for the $(i,k)$th entry of $\bA$, and $\veca_k$ for the $k$th column of $\bA$. 
The superscripts $(\cdot)^*,(\cdot)^T$, and $(\cdot)^H$ stand for the matrix conjugate, transpose, and Hermitian, respectively. 
The $N\times M$ all-zeros matrix is $\mathbf{0}_{N\times M}$, the $N\times N$ identity matrix is $\bI_N$, and the $N\times N$ unitary DFT matrix is $\bF_N$.
We denote the set of $N\times N$ orthogonal and unitary matrices with $O(N;\opR)$ and $U(N;\opC)$, respectively.
A complex-valued permutation matrix is defined as a unitary matrix in which every row and column has exactly one non-zero entry; the set of complex-valued permutation matrices of dimension $N\times N$ is $\CP(N)$.
We denote the element-wise multiplication, absolute value, and $r$th power by $\circ$, $|\cdot|$, and $(\cdot)^{\circ r}$, respectively. 
Following~\cite{zhai20complete}, we call $\vecnorm{\bA}_p^p\define\sum_{i,k} |A_{i,k}|^p$ the ``$\ell^p$-norm,'' even though the $p$th root is missing.
The Frobenius norm is $\|\bA\|_F = (\sum_{i,k} |A_{i,k}|^2)^{1/2}$.
Real and imaginary parts are indicated by $\Re\{\cdot\}$ and $\Im\{\cdot\}$. 
The expectation operator is~$\Ex{}{\cdot}$. 
We use $\delta[n]$ to refer to the Kronecker delta function for $n\in\opZ$, such that $\delta[0]=1$ and $\delta[n]=0$ for $n\neq 0$.
We denote the Kronecker product by~$\otimes$.
All complex-valued gradients follow the definitions of~\cite{kreutzdelgado2009complex}. 
% !TEX root = 21ISIT.tex
% DO NOT REMOVE THE ABOVE COMMENT!

\section{Prerequisites}
\label{sec:prereq}
\subsection{System and Channel Model}
\label{sec:sys_model}
We consider a massive MU-MIMO uplink system in  which~$U$ single-antenna UEs transmit data to a BS equipped with $B$ antennas.
Let $\bH\in\opC^{B\times U}$ and $\vecs\in\mathcal{S}^U$ denote the MIMO channel matrix and the data symbols from constellation $\mathcal{S}$, respectively. For narrowband transmission, we can express the receive vector $\vecr\in\opC^B$ in the \textit{antenna domain} as
\begin{align} \label{eq:y_Hs}
    \vecr = \bH \vecs + \vecn,
\end{align}
where $\vecn\in\opC^B$ models circularly-symmetric Gaussian noise.% with \cs{do you ever need the covariance matrix?} $\bar\vecn\!\sim\!\jpg(\mathbf{0}_{B\times1},\bGamma)$.

We focus on wave propagation at mmWave and THz frequencies \cite{rappaport_book}, and assume a sufficiently large distance between the BS and the UEs (or scatterers). For a BS equipped with a uniform linear array (ULA) and $\lambda/2$ antenna spacing, where $\lambda$ is the wavelength, the columns of the MIMO channel matrix~$\bH$ representing the channel between a specific UE and the BS antenna array can be modelled as follows~\cite{tse}:
\begin{align} \label{eq:h_sum}
\vech = \textstyle \sum_{\ell=0}^{L-1} \alpha_\ell\vecp(\omega_\ell), \quad \vecp(\omega) = \big[ 1,e^{j\omega},\dots,e^{j(B-1)\omega} \big]^T\!\!.
\end{align}
Here, $L$ refers to the total number of paths arriving at the antenna array (including a potential line-of-sight path), $\alpha_\ell\in\opC$ is the complex-valued channel gain associated with the $\ell$th propagation path, and $\omega_\ell$ is the angular frequency usually given by the relation $\omega_\ell=\pi\sin(\phi_\ell)$, where $\phi_l$ is the incidence angle of the $\ell$th path to the antenna array.

We obtain the \emph{beamspace} (or angular domain) representation by applying a DFT to the receive vector in \fref{eq:y_Hs} as in~\cite{seyedicassp20,mirfarshbafan19a}, which yields
$\bar{\vecr} = \bF_N\vecr = \bar\bH\vecs + \bar{\vecn}$.
We note that the beamspace representation transforms the superposition of $L$ complex sinusoids in~\fref{eq:h_sum} into the frequency domain, which results in sparse beamspace channel vectors if $L$ is small. 
However, real-world mmWave or THz channel vectors are only approximated by~\fref{eq:h_sum}, due to scattering, diffraction, and system or hardware impairments. 
This key fact motivates us to examine the DFT's optimality for beamspace transforms and to \textit{learn} alternative unitary transforms that exhibit superior sparsifying properties than the widely-used DFT.

\subsection{Problem Formulation}
\label{sec:problem}

Reference~\cite{zhai20complete} recently developed an $\ell^4$-norm-based dictionary learning framework over the orthogonal group $O(N;\opR)$ for a set of given real-valued vectors. 
The intuition behind this framework is that maximizing the $\ell^4$-norm of a matrix over a hypersphere promotes sparsity, i.e., the sparsest points on an $\ell^2$-norm-hypersphere have the smallest $\ell^1$-norm and  largest $\ell^4$-norm~\cite{zhai20complete}.
In what follows, we build upon this insight and consider the complex-valued case with a stochastic data model in order to use this framework for beamspace processing.

Suppose the data samples $\vecy(\Omega)\in\opC^N$ depend on a random variable $\Omega$ with probability density function (PDF) $f_\Omega(\Omega)$. Our goal is to learn a unitary transform that sparsifies these random samples in expectation.
Specifically, we measure the sparsity of these data samples after a unitary transform $\bA\in U(N;\opC)$ via the $\ell^4$-norm using the following: %\st{is $\gint{}$ still good}
\begin{align} \label{eq:gint}
\gint{\bA,\vecy(\Omega)}&\!\define \Ex{\Omega} {\vecnorm{\bA\vecy(\Omega)}_4^4} 
\!=\!\! \int \vecnorm{\bA\vecy(\omega)}_4^4 f_\Omega(\omega) \td\omega.
\end{align}
In order to learn unitary transforms for the stochastic data model, we propose to solve the following optimization problem: 
\begin{align}
\bA = \underset{\tilde\bA}{\text{argmax}}\,\, \gint{\tilde\bA,\vecy(\Omega)}  \, \text{subject to }  \tilde\bA\in U(N;\opC). \tag{O1}\label{eq:obj}
\end{align}
The fundamental properties of the $\ell^4$-norm over the real-valued orthogonal group $O(N;\opR)$ were established in~\cite[Lemmas 5 and 6]{zhai20complete} and can be generalized to the complex case by replacing signed permutation matrices with complex permutation matrices, and standard canonical vectors with the columns of complex permutation matrices. A detailed analysis of the complex case will be provided in an extended version of this paper~\cite{journaltobe}.
One important property is the invariance of the $\ell^4$-norm with respect to complex permutations, i.e., for any $\bC\in\CP(N)$, we have $\vecnorm{\bC\bA}_4^4 = \vecnorm{\bA}_4^4$, so that the solutions to the problem in~\fref{eq:obj} are unique up to a complex permutation.

% !TEX root = 21ISIT.tex
% DO NOT REMOVE THE ABOVE COMMENT!

\section{Learning a Sparsifying Transform}
\label{sec:learning}
We  now propose two algorithms to solve \fref{eq:obj}.
We keep our explanations general while introducing our algorithms, and discuss concrete beamspace applications in Sections \ref{sec:opt_dft} and \ref{sec:results}.

\subsection{MSP: Matching, Stretching, and Projection}
\label{sec:msp}
The proposed MSP algorithm builds upon~\cite [Algorithm 2] {zhai20complete} for our  complex-valued stochastic model instead of using a finite set of real-valued observation samples. 
This adaptation can be interpreted as having an infinitely large set of data vectors that follow a probabilistic distribution. 
In essence, the MSP algorithm performs a projected gradient ascent in the objective~\fref{eq:obj} with an infinite step size. 
In each iteration $t$, we \textit{match} the estimate~$\bA_t$ to the observation~$\vecy(\omega)$, \textit{stretch} all entries of $\bA_t\vecy(\omega)$ with the cubic function in the gradient $|\bA_t\vecy(\omega)|^{\circ2}\circ(\bA_t\vecy(\omega))$, and \textit{project} it back onto the unitary group~\cite{zhai20complete}.
Given the singular value decomposition (SVD) of $\bA$ as $\SVD{\bA} = \bU\bSigma\bV^H$, projection onto the unitary group is accomplished  by
\begin{align}
    \proj{U(N;\opC)}{\bA} \define \underset{\bM\in U(N;\opC)}{\argmin} \vecnorm{\bM-\bA}_F^2 = \bU\bV^H,
\end{align}
where~\cite[Lemma 9]{zhai20complete} is extended to the complex set. The resulting MSP procedure is summarized as follows: 

\vspace{-0.1cm}
\begin{oframed}
\vspace{-0.35cm}
\begin{alg}[MSP]\label{alg:msp}
Initialize $\bA_0\in U(N;\opC)$. For every iteration $t=0,1,\dots,$ until convergence, compute the gradient of the objective with respect to $\bA_t$ as
\begin{align}
&\nabla_{\bA_t} \gint{\bA_t,\vecy(\Omega)} \nonumber\\
&= \int 2\big(|\bA_t\vecy(\omega)|^{\circ2}\circ(\bA_t\vecy(\omega))\big) \vecy(\omega)^H f_\Omega(\omega) \td\omega \label{eq:msp_int}
\end{align}
and project the gradient onto the unitary group
\begin{equation}
\bA_{t+1} = \proj{U(N;\opC)}{\nabla_{\bA_t} \gint{\bA_t,\vecy(\Omega)}}\!. \label{eq:msp_proj}
\end{equation}
%Output is $\bA_T$.
\end{alg}
\vspace{-0.35cm}
\end{oframed}
We note that the analysis of real-valued MSP algorithm in~\cite[Propositions 12-15, Theorem 16]{zhai20complete} can be generalized to the unitary case, which we will provide in our journal version~\cite{journaltobe}. 
In order to identify stationary points of~\fref{alg:msp}, we will use the following result:

\begin{lem} \label{lem:UVH}
Let $\bA\in U(N;\opC)$ be a unitary matrix with $\SVD{\bA}=\bU\bSigma\bV^H$ and $\bD\in\opR^{N\times N}$ be a diagonal matrix.
Suppose we have the matrices $\bA_1 = \bD\bA$ and $\bA_2 = \bA\bD$ with $\SVD{\bA_1}=\bU_1\bSigma_1\bV_1^H$ and $\SVD{\bA_2}=\bU_2\bSigma_2\bV_2^H$. Then, 
there exist SVDs for $\bA_1$ and $\bA_2$ such that $\bSigma_1 = \bSigma_2 = \bD$ and $\bA = \bU\bV^H = \bU_1\bV_1^H = \bU_2\bV_2^H$.
\end{lem}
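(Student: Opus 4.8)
The plan is to reduce the statement to an explicit choice of singular value decompositions, since only \emph{existence} of suitable SVDs is claimed. First I would record that, because $\bA$ is unitary, every singular value of $\bA$ equals one; hence any SVD has $\bSigma=\bI_N$ and the given factorization collapses to $\bA=\bU\bV^H$, which is the first of the asserted equalities and needs no further work. The remaining content is therefore to produce, for each of $\bA_1=\bD\bA$ and $\bA_2=\bA\bD$, one valid SVD whose diagonal factor is exactly $\bD$ and whose polar factor $\bU_i\bV_i^H$ equals $\bA$.

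For the construction I would simply guess the factors and then verify them. For $\bA_1=\bD\bA$ take $\bU_1=\bI_N$, $\bSigma_1=\bD$, and $\bV_1=\bA^H$; for $\bA_2=\bA\bD$ take $\bU_2=\bA$, $\bSigma_2=\bD$, and $\bV_2=\bI_N$. In each case the outer factors are unitary (as $\bA\in U(N;\opC)$ and $\bI_N$ is trivially unitary) and $\bSigma_i=\bD$ is a nonnegative diagonal matrix, so these are genuine SVDs; moreover $\bU_1\bSigma_1\bV_1^H=\bI_N\bD\bA=\bD\bA=\bA_1$ and $\bU_2\bSigma_2\bV_2^H=\bA\bD\bI_N=\bA\bD=\bA_2$, so they decompose the correct matrices. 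Finally, the polar factors are $\bU_1\bV_1^H=\bI_N\,(\bA^H)^H=\bA$ and $\bU_2\bV_2^H=\bA\,\bI_N=\bA$, which closes the chain $\bA=\bU\bV^H=\bU_1\bV_1^H=\bU_2\bV_2^H$ together with $\bSigma_1=\bSigma_2=\bD$.

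To motivate that these are the \emph{right} singular values rather than an accident of the guess, I would note $\bA_1\bA_1^H=\bD\bA\bA^H\bD=\bD^2$ and $\bA_2^H\bA_2=\bD\bA^H\bA\bD=\bD^2$, so the singular values of both $\bA_1$ and $\bA_2$ are the diagonal entries of $\bD$, confirming that $\bSigma_i=\bD$ is forced up to ordering. The one point requiring care --- and the only real obstacle --- is that $\bSigma_i=\bD$ is a legitimate singular-value matrix only when $\bD$ has nonnegative diagonal entries; this holds in the MSP setting, where $\bD$ collects gradient magnitudes, and I would make the hypothesis $\bD\succeq\bZero$ explicit. If $\bD$ had a negative entry, its sign would have to be absorbed into $\bV_i$ and the polar factor would become $\bP\bA$ with $\bP$ a signature matrix rather than $\bA$ itself, so nonnegativity is exactly what makes the conclusion $\bU_i\bV_i^H=\bA$ hold. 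A secondary subtlety is that when $\bD$ has repeated or zero diagonal entries the SVD is non-unique and the polar factor of a singular matrix is not determined; this is precisely why the statement only asserts that suitable SVDs \emph{exist}, and the explicit factors above furnish one such choice in every case.
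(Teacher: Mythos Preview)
Your proposal is correct and follows essentially the same route as the paper: both explicitly exhibit the SVDs $\bA_1=\bI_N\,\bD\,(\bA^H)^H$ and $\bA_2=\bA\,\bD\,\bI_N$ and read off $\bU_i\bV_i^H=\bA$. Your added remark that $\bD\succeq\bZero$ is needed for $\bSigma_i=\bD$ to be a legitimate singular-value matrix is a valid caveat that the paper leaves implicit.
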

\begin{proof}
Since the left and right singular vectors of full rank matrices are unique up to complex rotations, $\bU\bV^H$ is unique. Since $\bA$ is unitary, one SVD of $\bA$ is $\bU=\bI_N$, $\bSigma = \bI_N$, $\bV=\bA$.
Then $\bA_1=\bD\bA=\bD\bV=\bU_1\bSigma_1\bV_1$, so there exists an SVD of $\bA_1$ such that $\bU_1=\bI_N, \bSigma_1=\bD$, and $\bV_1 = \bA$. Another SVD of $\bA$ is $\bU=\bA$, $\bSigma = \bI_N$, and $\bV=\bI_N$.
Then, $\bA_2=\bA\bD=\bU\bD=\bU_2\bSigma_2\bV_2$, so there exists an SVD of $\bA_2$ such that $\bU_2=\bA, \bSigma_2=\bD$, and $\bV_2 = \bI_N$. Consequently, we have $\bU_1\bV_1^H=\bU_2\bV_2^H = \bA$.
\end{proof}
We now use~\fref{lem:UVH} to determine the following condition when a unitary matrix  is a stationary point of~\fref{alg:msp}:
\begin{lem} \label{lem:msp_st}
Let $\vecy(\Omega)\in\opC^{N}$ represent the stochastic data to be sparsified and $\bA\in U(N;\opC)$ be a dictionary for $\vecy(\Omega)$.
Then, the matrix $\bA$ is a stationary point of \fref{alg:msp} if there exists a diagonal matrix $\bD\in\opR^{N\times N}$ such that $\nabla_\bA \gint{\bA,\vecy(\Omega)} = \bD\bA$ or $\nabla_\bA \gint{\bA,\vecy(\Omega)} = \bA\bD$.
\end{lem}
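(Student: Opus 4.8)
The plan is to read off stationarity directly from the update rule \fref{eq:msp_proj} and then reduce everything to \fref{lem:UVH}. By definition, $\bA$ is a stationary point of \fref{alg:msp} exactly when substituting $\bA_t=\bA$ into \fref{eq:msp_proj} returns $\bA_{t+1}=\bA$, i.e., when $\proj{U(N;\opC)}{\nabla_\bA\gint{\bA,\vecy(\Omega)}}=\bA$. Because $\bA$ is unitary and therefore full rank, and because the polar factor $\bU\bV^H$ of a full-rank matrix is unique (as already used in the proof of \fref{lem:UVH}), this projection is well defined, and it suffices to produce a single SVD of the gradient whose polar factor is $\bA$.

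First I would put the gradient \fref{eq:msp_int} into a right-factored form. Writing $\vecu=\bA\vecy(\omega)$ and using $\vecy(\omega)=\bA^H\vecu$, the integrand outer product becomes $(|\vecu|^{\circ2}\circ\vecu)\,\vecu^H\bA$, so that $\nabla_\bA\gint{\bA,\vecy(\Omega)}=\bG\bA$ with $\bG=\int 2(|\vecu|^{\circ2}\circ\vecu)\,\vecu^H f_\Omega(\omega)\td\omega$. Its diagonal entries are $\int 2|\vecu_i|^4 f_\Omega(\omega)\td\omega\ge 0$. This remark matters because whenever the hypothesis $\nabla_\bA\gint{\bA,\vecy(\Omega)}=\bD\bA$ holds we must have $\bD=\bG$, so $\bD$ automatically has a non-negative diagonal and can legitimately play the role of a singular-value matrix $\bSigma$, as \fref{lem:UVH} requires.

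I would then handle the two hypotheses in turn. If $\nabla_\bA\gint{\bA,\vecy(\Omega)}=\bD\bA$, apply \fref{lem:UVH} to $\bA_1=\bD\bA$: the lemma furnishes an SVD $\bA_1=\bU_1\bSigma_1\bV_1^H$ with $\bU_1\bV_1^H=\bA$, hence $\proj{U(N;\opC)}{\bD\bA}=\bU_1\bV_1^H=\bA$. If instead $\nabla_\bA\gint{\bA,\vecy(\Omega)}=\bA\bD$, apply the lemma to $\bA_2=\bA\bD$ to obtain an SVD with $\bU_2\bV_2^H=\bA$, so $\proj{U(N;\opC)}{\bA\bD}=\bA$. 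In either case the MSP update leaves $\bA$ unchanged, which is exactly the claim.

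The main obstacle is the well-definedness of the projection: the polar factor is unique only when the matrix is full rank, so I must exclude zero diagonal entries of $\bD$. The factored form above settles this, since a vanishing entry $\bG_{ii}$ would force $(\bA\vecy(\Omega))_i=0$ almost surely---a degenerate data configuration we rule out---so that $\bD$ is non-singular and $\bD\bA$ (resp.\ $\bA\bD$) has full rank. A secondary bookkeeping point is aligning the sign and phase conventions so that $\bSigma_1=\bSigma_2=\bD$ is consistent with non-negative singular values; the non-negativity of $\diag(\bG)$ just established is precisely what makes this consistent, so no further work is needed there.
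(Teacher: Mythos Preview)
Your argument is correct and follows exactly the paper's route: feed $\bA_t=\bA$ into the update \fref{eq:msp_proj} and invoke \fref{lem:UVH} to conclude that $\proj{U(N;\opC)}{\bD\bA}=\proj{U(N;\opC)}{\bA\bD}=\bA$; the paper's proof is literally that single step. Your added care---factoring the gradient as $\bG\bA$ with $\diag(\bG)\ge 0$ so that $\bSigma=\bD$ is a legitimate singular-value matrix, and checking full rank for uniqueness of the polar factor---fills in subtleties the paper leaves implicit in both \fref{lem:UVH} and \fref{lem:msp_st}. One small asymmetry in your write-up: the non-negativity argument only covers the $\bD\bA$ hypothesis (where indeed $\bD=\bG$); under the $\bA\bD$ hypothesis one has $\bD=\bA^H\bG\bA$, and $\diag(\bG)\ge 0$ alone does not force $\diag(\bD)\ge 0$, so that branch still rests on the same tacit assumption the paper makes.
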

\begin{proof}
Suppose $\bA_t = \bA$ such that $\nabla_{\bA_t} \gint{\bA_t,\vecy(\Omega)} = \bD\bA_t$ or $\nabla_{\bA_t} \gint{\bA_t,\vecy(\Omega)} = \bA_t\bD$. Then, we have 
\begin{align}
 \bA_{t+1} = \proj{U(N;\opC)}{\nabla_{\bA_t} \gint{\bA_t,\vecy(\Omega)}} = \bA_t,
\end{align} 
where the last equality follows by~\fref{lem:UVH}. 
Since $\bA = \bA_t = \bA_{t+1}$, the matrix $\bA$ is a stationary point of~\fref{alg:msp}.
\end{proof}
In \fref{sec:dft_msp}, we will use~\fref{lem:msp_st} to examine the DFT's optimality for beamspace transforms. 
Proving \emph{local} optimality of a stationary point of~\fref{alg:msp} % the MSP Algorithm 
is difficult as it requires an analysis of the Hessian---such results are also missing for the real-valued MSP algorithm in~\cite{zhai20complete}. 
Nonetheless, we are able to study local optimality via the procedure detailed next.

\subsection{CA: Coordinate Ascent}
\label{sec:ca}
We now propose a CA algorithm to find a solution to the optimization problem in~\fref{eq:obj} by directly walking on the Stiefel manifold. 
In each iteration, this method preserves unitarity avoiding a projection on the unitary group altogether; this key property enables us to analyze local optimality.
The algorithm bases on the decomposition of unitary matrices into a set of complex-valued phase shifts and real-valued Givens rotations on pairs of rows as done in~\cite{christoph_givens,golub}.

Let $\bG(i,k,\alpha_{i,k})\in\opC^{N\times N}$ for $i>k$ denote the (real-valued) Givens rotation matrix of the form $G_{i,i}=G_{k,k}=\cos(\alpha_{i,k}),\,G_{i,k}= - G_{k,i}=\sin(\alpha_{i,k}),\,G_{\ell,\ell}=1,\ell\neq i,k$, and $G_{\ell,m}=0$ otherwise.
Multiplying a matrix with $\bG(i,k,\alpha_{i,k})$ from the left amounts to a counterclockwise rotation of $\alpha_{i,k}$ radians in the $(i,k)$ coordinate plane.
Now let us define a phase rotation matrix $\bR(k,\beta_k)\in\opC^{N\times N}$, which is a diagonal matrix with $R_{k,k} = e^{j\beta_k}$ and $R_{\ell,\ell} = 1$ if $\ell\neq k$.
Note that the multiplication of 
Givens and phase rotation matrices with a unitary matrix is still unitary. 
Hence, we can maximize the $\ell^4$-norm by iteratively optimizing over the angles $\alpha_{ik},\beta_i$, and $\beta_k$ while preserving unitarity of the transform at every iteration of the CA algorithm.  
Note that since the $\ell^4$-norm is invariant to complex permutations, in each iteration, we first optimize for $\alpha_{i,k}$ and then over $\beta_i$ and $\beta_k$ accordingly. 
The resulting procedure is summarized as follows: 
\vspace{-0.1cm}
\begin{oframed}
\vspace{-0.35cm}
\begin{alg}[CA]\label{alg:ca}
Initialize $\bA_0 \in U(N;\opC)$. For every iteration $t=0,1,\dots,$ until convergence, and for every $(i,k)$ pair such that $i=1,\dots,N-1$ with $i>k$, find 
\begin{align}
\alpha_{i,k} =  \underset{\tilde\alpha \in [0,\pi/2)}{\argmax} \int \vecnorm{\bG(i,k,\tilde\alpha)\bA_t\vecy(\omega)}_4^4 f_\Omega(\omega) \td\omega, \label{eq:ca_alpha} 
\end{align}
\vspace{-0.3cm}
\begin{align}
&\{\beta_i,\beta_k\} =\underset{\tilde\beta_i,\tilde\beta_k \in [0,2\pi)}{\argmax} \notag \\
&\!\!\!\!\!\int \vecnorm{\bG(i,k,\alpha_{i,k})\bR(i,\tilde\beta_i)\bR(k,\tilde\beta_k)\bA_t\vecy(\omega)}_4^4\! f_\Omega(\omega) \td\omega, \label{eq:ca_beta}
\end{align}	
and apply the update
\begin{align}
\bA_{t+1} &=  \bG(i,k,\alpha_{i,k})\bR(i,\beta_i)\bR(k,\beta_k)\bA_t.
\end{align}
\end{alg}
\vspace{-0.35cm}
\end{oframed}
\vspace{-0.1cm}
We have observed that the results obtained by CA are indistinguishable from those obtained by MSP, while its complexity is typically higher, as we have to  iterate through all $(i,k)$ index pairs \textit{at least once}.  
Rather than using CA in practice, its main advantage is that we can establish local optimality. Here, the first and second derivatives are with respect to single variables (the Givens rotation angles) only, whereas MSP requires the gradient with respect to an $N\times N$ matrix. 
The optimality criteria of~\fref{alg:ca} are as follows:
\begin{lem} \label{lem:ca_opt}
Let $\vecy(\Omega)\in\opC^{N}$ represent the stochastic data to be sparsified and $\bA\in U(N;\opC)$ be a dictionary for $\vecy(\Omega)$. 
Let the matrix $\bG$ as defined in~\fref{sec:ca} and $\vecx(\omega)=\bA\vecy(\omega)$.  
Then, the matrix~$\bA$ is a local maximum of~\fref{alg:ca} if and only if the two following conditions hold for all $(i,k),\, i>k$: 
\begin{align}   
i) \quad &\pder{\int \vecnorm{\bG(i,k,\alpha)\vecx(\omega)}_4^4 f_\Omega(\omega)\td\omega}{\alpha}\bigg|_{\alpha=0}  \nonumber \\
&=2\int ( x_k(-x_i)(x_k^*)^2 + x_k^2x_k^*(-x_i^*) \nonumber  \\
&\qquad\qquad+ x_ix_k(x_i^*)^2 + x_i^2x_i^*x_k^*  )f_\Omega(\omega)\td{\omega} = 0 \label{eq:first_deriv} \\ 
ii) \quad &
\pder{^2\int \vecnorm{\bG(i,k,\alpha)\vecx(\omega) }_4^4f_\Omega(\omega)\td\omega}{\alpha^2}\bigg|_{\alpha=0} \nonumber\\
&= 4 \int \big( 2\Re\{x_k^2(x_i^*)^2\} + 4|x_i|^2|x_k|^2 \nonumber \\
&\qquad\qquad- |x_k|^4 - |x_i|^4 \big)f_\Omega(\omega) \td\omega \label{eq:second_deriv} < 0.
%& <0. 
\end{align}
\end{lem}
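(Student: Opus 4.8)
The plan is to reduce the question of local optimality on the unitary manifold to a family of one‑dimensional problems, one per coordinate pair, and then to compute the first and second derivatives of each such problem explicitly. First I would pin down what it means for $\bA$ to be a local maximum of \fref{alg:ca}. Because CA updates one coordinate pair at a time, and because within each $(i,k)$ update it optimizes the Givens angle $\alpha$ \emph{before} the phases $\beta_i,\beta_k$, the matrix $\bA$ is a local maximum precisely when, for every pair $(i,k)$ with $i>k$, the angle $\alpha=0$ is a strict local maximizer of the scalar map $\psi_{i,k}(\alpha)\define\int\vecnorm{\bG(i,k,\alpha)\vecx(\omega)}_4^4 f_\Omega(\omega)\td\omega$, where $\vecx(\omega)=\bA\vecy(\omega)$. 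The observation making this reduction clean is that at $\alpha=0$ the Givens factor is the identity, so the subsequent phase step multiplies the entries of $\vecx(\omega)$ by unit‑modulus scalars and hence leaves $\psi_{i,k}$ unchanged; by the complex‑permutation invariance of the $\ell^4$-norm, the phase directions are flat and the only active direction in each coordinate subproblem is $\alpha$. Therefore the algorithm cannot move away from $\bA$ if and only if $\psi_{i,k}$ has a local maximum at $\alpha=0$ for all pairs, i.e. the first‑order condition $\psi_{i,k}'(0)=0$ holds together with the second‑order condition $\psi_{i,k}''(0)<0$.

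Next I would carry out the two derivative computations. Only the $i$th and $k$th entries of $\bG(i,k,\alpha)\vecx(\omega)$ depend on $\alpha$, namely $u(\alpha)=\cos(\alpha)x_i+\sin(\alpha)x_k$ and $v(\alpha)=-\sin(\alpha)x_i+\cos(\alpha)x_k$, so $\psi_{i,k}(\alpha)$ differs from a constant only through $|u(\alpha)|^4+|v(\alpha)|^4=(uu^*)^2+(vv^*)^2$. Writing $P=uu^*$, $Q=vv^*$ and differentiating under the integral sign gives first derivative $2PP'+2QQ'$ and second derivative $2(P')^2+2PP''+2(Q')^2+2QQ''$. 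Evaluating the elementary derivatives at $\alpha=0$ (where $u=x_i$, $v=x_k$, $u'=x_k$, $v'=-x_i$, $u''=-x_i$, $v''=-x_k$) and substituting yields, after grouping, the expression in \eqref{eq:first_deriv} for the first derivative, and a second derivative of the form $16(\Re\{x_ix_k^*\})^2+8|x_i|^2|x_k|^2-4|x_i|^4-4|x_k|^4$.

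The last step is to match this second‑derivative expression to the stated form \eqref{eq:second_deriv}. Here I would use $(\Re\{w\})^2=\tfrac12\big(\Re\{w^2\}+|w|^2\big)$ with $w=x_ix_k^*$, together with $\Re\{x_k^2(x_i^*)^2\}=\Re\{(x_ix_k^*)^2\}$ (the two arguments being complex conjugates), which turns the $16(\Re\{x_ix_k^*\})^2$ term into $8\Re\{x_k^2(x_i^*)^2\}+8|x_i|^2|x_k|^2$ and collapses everything to $4\big(2\Re\{x_k^2(x_i^*)^2\}+4|x_i|^2|x_k|^2-|x_k|^4-|x_i|^4\big)$, as in \eqref{eq:second_deriv}. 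Imposing $\psi_{i,k}'(0)=0$ and $\psi_{i,k}''(0)<0$ for all pairs then gives exactly conditions (i) and (ii), and the equivalence established in the first step closes both directions of the ``if and only if.''

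I expect the main obstacle to be the bookkeeping in the first step rather than the derivatives. One must argue carefully that coordinatewise local maximality over the Givens/phase generators is the correct reading of ``local maximum of \fref{alg:ca},'' and that the phase sub‑step is genuinely inactive at $\alpha=0$, so that the one‑dimensional conditions in $\alpha$ alone are both necessary and sufficient (for the necessity direction one only gets $\psi_{i,k}''(0)\le 0$ in the degenerate boundary case, and the strict inequality in (ii) should be read as characterizing \emph{strict} local maxima). The derivative computations themselves are routine once the $\alpha$-dependence is isolated to the two rotated entries, the only mild subtlety being the trigonometric‑to‑conjugate rewriting in the final step.
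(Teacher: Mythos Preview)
Your proposal is correct and follows the same idea the paper invokes: the paper's own proof is a single sentence stating that the lemma ``immediately follows from first- and second-derivative tests,'' without computing the derivatives or discussing the reduction to one-dimensional subproblems. Your argument supplies exactly those details---isolating the $\alpha$-dependence to the two rotated entries, carrying out the derivative bookkeeping, and rewriting via $(\Re\{w\})^2=\tfrac12(\Re\{w^2\}+|w|^2)$---and is therefore a fleshed-out version of the same approach rather than a different one; your caveats about the phase step being inactive at $\alpha=0$ and about strict versus non-strict inequality in (ii) are points the paper leaves implicit.
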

The proof of~\fref{lem:ca_opt} immediately follows from first- and second-derivative tests~\cite{strang1991calculus}.
We will utilize this lemma when examining DFT's optimality for beamspace in~\fref{sec:dft_ca}. 

% !TEX root = 21ISIT.tex
% DO NOT REMOVE THE ABOVE COMMENT!
\section{Optimality of the DFT}
\label{sec:opt_dft}

We now discuss concrete applications of our algorithms in~\fref{sec:learning} to learn beamspace transforms. 
Since our prime goal is to sparsify channel vectors of mmWave and THz systems as explained in~\fref{sec:sys_model}, 
 we adopt the stochastic data model
\begin{equation} \label{eq:ejw}
    \vecy(\Omega)=\ejW{\vecb}, \qquad \Omega\sim \Unif(0,2\pi),
\end{equation}
where $\vecb = [\,0,\,1,\,\dots,\,B-1\,]^T$, thus $\vecy(\Omega)\in\opC^{B}$.
We adopt this single-path model, which corresponds to free-space propagation with a uniform distribution\footnote{Another stochastic data model would be $e^{j\pi\sin{\Phi}\vecb}$,  $\Phi\sim \Unif(0,2\pi)$, which assumes a uniform distribution over the incidence angle~\cite{tse}. However, we use the model in \fref{eq:ejw} to facilitate our theoretical analysis.} over the angular frequency in~\fref{eq:h_sum}, for simplicity in our derivations. We emphasize that the results in this section also hold for an independent, identically-distributed multipath model as it will be shown in~\cite{journaltobe}.
Based on existing results on beamspace transforms~\cite{mirfarshbafan19a,alkhateeb14,schniter14,deng18,seyedicassp20}, it is known that the DFT is a good candidate to sparsify channel vectors. Thus, we use the DFT to initialize both our MSP and CA algorithms in order to (i) examine their optimality and (ii) find potentially better beamspace transforms. 
The optimality analysis of DFT for both algorithms is detailed next.

\subsection{Optimality Analysis with MSP}
\label{sec:dft_msp}
We prove that the DFT is a stationary point of MSP algorithm for the stochastic data model in~\fref{eq:ejw} with the following result: 
\begin{theorem} 
\label{thm:dft_st}
For the stochastic data model $\vecy(\Omega)$ as given in~\fref{eq:ejw}, 
the DFT matrix $\bF_B$ is a stationary point of~\fref{alg:msp}.
\end{theorem}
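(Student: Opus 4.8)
The plan is to invoke \fref{lem:msp_st}: it suffices to exhibit a real diagonal matrix $\bD\in\opR^{B\times B}$ for which the gradient \fref{eq:msp_int} evaluated at $\bA=\bF_B$ satisfies $\nabla_{\bF_B}\gint{\bF_B,\vecy(\Omega)}=\bF_B\bD$. First I would compute the transformed data. With $(\bF_B)_{m,n}=\tfrac{1}{\sqrt{B}}e^{-j2\pi mn/B}$ and $y_n(\omega)=e^{jn\omega}$, the $m$th entry of $\vecx(\omega)=\bF_B\vecy(\omega)$ is a single shifted Dirichlet-kernel sum,
\begin{align}
x_m(\omega)=\tfrac{1}{\sqrt{B}}\textstyle\sum_{p=0}^{B-1}e^{jp(\omega-2\pi m/B)}=\tfrac{1}{\sqrt{B}}S(\theta_m),\nonumber
\end{align}
where $S(\theta)\define\sum_{p=0}^{B-1}e^{jp\theta}$ and $\theta_m\define\omega-2\pi m/B$. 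Crucially, $x_m$ depends on $\omega$ and $m$ only through $\theta_m$.

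Next I would write the $(m,n)$ entry of the gradient. Using $f_\Omega(\omega)=\tfrac{1}{2\pi}$ on $(0,2\pi)$ and $\bar y_n(\omega)=e^{-jn\omega}$, \fref{eq:msp_int} gives
\begin{align}
\big[\nabla_{\bF_B}\gint{\bF_B,\vecy(\Omega)}\big]_{m,n}=\tfrac{1}{\pi B^{3/2}}\int_0^{2\pi}|S(\theta_m)|^2S(\theta_m)\,e^{-jn\omega}\td\omega.\nonumber
\end{align}
Substituting $\omega=\theta_m+2\pi m/B$ splits $e^{-jn\omega}=e^{-jn\theta_m}e^{-j2\pi mn/B}$; since the integrand is $2\pi$-periodic, the shift of the integration variable is immaterial, so the factor $e^{-j2\pi mn/B}=\sqrt{B}\,(\bF_B)_{m,n}$ pulls out and the residual integral
\begin{align}
c_n\define\int_0^{2\pi}|S(\theta)|^2S(\theta)\,e^{-jn\theta}\td\theta\nonumber
\end{align}
depends on $n$ alone. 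Hence the $n$th column of the gradient is a scalar multiple of the $n$th column of $\bF_B$, i.e.\ $\nabla_{\bF_B}\gint{\bF_B,\vecy(\Omega)}=\bF_B\bD$ with $D_{n,n}=c_n/(\pi B)$.

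The one thing left to verify --- and the only real subtlety --- is that $\bD$ is \emph{real}, as \fref{lem:msp_st} requires. I would establish this by expanding the cubic product as a triple sum $|S(\theta)|^2S(\theta)=\sum_{p,q,r=0}^{B-1}e^{j(p-q+r)\theta}$ and integrating term by term against $e^{-jn\theta}$ over a full period; by orthogonality each term contributes $2\pi$ exactly when $p-q+r=n$ and $0$ otherwise. Thus $c_n=2\pi\,\big|\{(p,q,r):p-q+r=n\}\big|$ is a nonnegative integer multiple of $2\pi$, so $D_{n,n}=2\big|\{(p,q,r):p-q+r=n\}\big|/B\in\opR$. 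With $\nabla_{\bF_B}\gint{\bF_B,\vecy(\Omega)}=\bF_B\bD$ for this real diagonal $\bD$, \fref{lem:msp_st} immediately yields that $\bF_B$ is a stationary point of \fref{alg:msp}. I expect the reality of $c_n$ to be where care is needed; working with the sum form $S(\theta)$ rather than the closed-form Dirichlet kernel also sidesteps the apparent singularity at $\theta_m=0$.
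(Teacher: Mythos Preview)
Your proof is correct and follows essentially the same route as the paper: both invoke \fref{lem:msp_st} and show that the gradient at $\bF_B$ equals $\bF_B\bD$ with $D_{n,n}$ proportional to the count $\big|\{(p,q,r)\in\{0,\dots,B-1\}^3:p-q+r=n\}\big|$, which is manifestly real. The only difference is computational packaging---the paper expands the partial derivative $\partial\hat g/\partial F_{i,k}^*$ directly in terms of matrix entries and integrates $e^{j\omega(\ell-m+n-k)}$, whereas your change of variables $\omega\mapsto\theta_m=\omega-2\pi m/B$ exploits the shift structure of the Dirichlet kernel to factor out $(\bF_B)_{m,n}$ up front; this is a slightly cleaner presentation of the same calculation.
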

\begin{proof}
Let $\bF=\bF_B$ for simplicity. 
Inserting $\bF$ for $\bA_0$, $f_\Omega(\omega)=\frac{1}{2\pi} \mathbbm{1}\{\omega\in (0,2\pi)\}$ in \fref{eq:gint}, we obtain
\begin{equation}
    \gint{\bF,\vecy(\Omega)} = \frac{1}{2\pi} \int_0^{2\pi} \vecnorm{\bF\vecy(\omega)}_4^4 \td\omega.
\end{equation}
To establish that $\bF$ is a stationary point, we utilize~\fref{lem:msp_st} to show that the gradient is equal to a column-scaled DFT matrix, i.e., $\nabla_\bF \gint{\bF,\vecy(\Omega)}= \bF\bD$ for some full-rank diagonal matrix $\bD\in\opC^{N\times N}$ satisfying
\begin{equation}\label{eq:aim_dftopt}
\pder{\gint{\bF,\vecy(\Omega)}}{\vecf_k^*} = D_{k,k}\vecf_k,\, \forall k\in \{0,1,\dots,B-1\}.
\end{equation}
Equivalently, 
our aim is to show that $\pder{\gint{\bF,\vecy(\Omega)}}{F_{i,k}^*} F_{i,k}^*$ is independent of $i$ and only depends on $k$ with
\begin{equation} \label{eq:aim_equiv}
 \pder{\gint{\bF,\vecy(\Omega)}}{F_{i,k}^*} F_{i,k}^*= D_{k,k},\, \forall i,k\in \{0,1,\dots,B-1\}.
\end{equation}
By expanding the gradient, we obtain
\begin{equation} \label{eq:gradF}
\pder{\gint{\bF,\vecy(\Omega)}}{F_{i,k}^*} = \frac{1}{2\pi}\!\sum_{\ell=0}^{B-1}\sum_{m=0}^{B-1}\sum_{n=0}^{B-1} F_{i\ell} F_{im}^* F_{ir} C(\ell,m,n,k),
\end{equation}
where \\[-0.9cm]
\begin{align}
 C(\ell,m,n,k) &= \int_{0}^{2\pi}  e^{j\omega (\ell-m+n-k)} \td\omega \\ 
&= 2\pi \delta[\ell-m+n-k].
\end{align} 
Inserting $F_{i,k} =  e^{\frac{-j 2\pi i k}{B}}$ into~\fref{eq:gradF} and simplifying, we obtain
\begin{align}
&\pder{\gint{\bF,\vecy(\Omega)}}{F_{i,k}^*} \notag\\
&= \sum_{\ell=0}^{B-1}\sum_{m=0}^{B-1}\sum_{n=0}^{B-1} e^{\frac{-j 2\pi i(\ell-m+n)}{B}}  \delta[\ell-m+n-k]\\
&= e^{\frac{-j 2\pi ik}{B}} \sum_{\ell=0}^{B-1}\sum_{m=0}^{B-1}\sum_{n=0}^{B-1} \delta[\ell-m+n-k]. \label{eq:gradient_inFik}
\end{align}
Finally, inserting~\fref{eq:gradient_inFik} into \fref{eq:aim_equiv} 
gives that $\pder{\gint{\bF,\vecy(\Omega)}}{F_{i,k}^*} F_{i,k}^*$ indeed only depends on $k$  as 
\begin{align}
\pder{\gint{\bF,\vecy(\Omega)}}{F_{i,k}^*} F_{i,k}^* &=  \sum_{\ell=0}^{B-1}\sum_{m=0}^{B-1}\sum_{n=0}^{B-1} \delta[\ell-m+n-k]\\
& = D_{k,k},\forall i,k\in \{0,1,\dots,B-1\}.
\end{align}
Hence, we have that $\nabla_{\bF_B} \gint{\bF_B,\bY}= \bF\bD$ and by~\fref{lem:msp_st}, the DFT matrix $\bF_B$ is a stationary point of MSP algorithm.
\end{proof}
Note that this analysis does not establish whether the DFT is a saddle point or a local maximum of~\fref{eq:obj}. Fortunately, we can use the proposed  CA algorithm to reach this conclusion.

\subsection{Optimality Analysis with CA}
\label{sec:dft_ca}

We now use the following result to establish that the DFT is a local maximum of~\fref{alg:ca} for the model $\vecy(\Omega)$ in~\fref{eq:ejw}.

\begin{theorem} 
\label{thm:dft_opt}
For the stochastic data model $\vecy(\Omega)$ as given in~\fref{eq:ejw}, the DFT matrix $\bF_B$ is a local maximum of~\fref{alg:ca}.
\end{theorem}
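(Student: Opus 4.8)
The plan is to verify the two conditions of~\fref{lem:ca_opt} for the choice $\vecx(\omega)=\bF_B\vecy(\omega)$ that initializes~\fref{alg:ca}; establishing both for every pair $(i,k)$ with $i>k$ is exactly what it means for $\bF_B$ to be a local maximum. The computational engine is the same one that drives the proof of~\fref{thm:dft_st}: since $\Omega\sim\Unif(0,2\pi)$ we have $\Ex{\Omega}{e^{j\Omega p}}=\delta[p]$, so writing $x_i(\omega)=\tfrac{1}{\sqrt B}\sum_{n}e^{-j2\pi in/B}e^{j\omega n}$ turns every expectation of a degree-four monomial in $x_i,x_k,x_i^*,x_k^*$ into a count of integer index tuples weighted by unit-modulus DFT phases. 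I will use this rule throughout; as the computation shows, every relevant expectation depends on $(i,k)$ only through $m=i-k\in\{1,\dots,B-1\}$.

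First I would dispatch the first-order condition~\fref{eq:first_deriv}. Regrouping its four terms shows the integrand equals $2\Re\{x_ix_k[(x_i^*)^2-(x_k^*)^2]\}$, because the second and fourth terms are the conjugates of the first and third. Applying the counting rule, both $\Ex{\Omega}{x_ix_k(x_i^*)^2}$ and $\Ex{\Omega}{x_ix_k(x_k^*)^2}$ reduce to sums $\tfrac{1}{B^2}\sum_b N(b)\,e^{\pm j2\pi bm/B}$ with identical real, nonnegative, $(i,k)$-independent weights $N(b)$; their difference is therefore $\tfrac{1}{B^2}\sum_b N(b)\,2j\sin(2\pi bm/B)$, which is purely imaginary. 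Hence its real part vanishes and~\fref{eq:first_deriv} holds for all $(i,k)$.

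The core of the argument is the strict second-order condition~\fref{eq:second_deriv}. I would first simplify its integrand algebraically: with $p=|x_i|^2$, $q=|x_k|^2$, and $z=x_kx_i^*$, the bracket collapses to $4(\Re\{z\})^2-(p-q)^2$, so~\fref{eq:second_deriv} is equivalent to $4\,\Ex{\Omega}{(\Re\{x_kx_i^*\})^2}<\Ex{\Omega}{(|x_i|^2-|x_k|^2)^2}$. Setting $Q(m)=\Ex{\Omega}{|x_i|^2|x_k|^2}$, the counting rule gives the two identities $\Re\,\Ex{\Omega}{x_k^2(x_i^*)^2}=\cos(2\pi m/B)\,Q(m)$ and $\Ex{\Omega}{|x_i|^4}=\Ex{\Omega}{|x_k|^4}=Q(0)$. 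Substituting these reduces the inequality to the clean scalar condition
\[
Q(m)\,\big(2+\cos(2\pi m/B)\big) < Q(0), \qquad m=1,\dots,B-1.
\]

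What remains is to evaluate these sums in closed form and check the inequality; this is the step I expect to be the main obstacle. The counting rule yields $Q(m)=\tfrac{1}{B^2}\sum_{t=-(B-1)}^{B-1}(B-|t|)^2\cos(2\pi mt/B)$. Folding this length-$(2B-1)$ triangular-squared sequence modulo $B$ recasts it as an ordinary $B$-point DFT of $B^2-2Br+2r^2$, which I would evaluate with the standard closed forms for $\sum_r r\,\zeta^r$ and $\sum_r r^2\zeta^r$ at the root of unity $\zeta=e^{-j2\pi m/B}$. After simplification this gives the compact values $Q(m)=\tfrac{1}{B\sin^2(\pi m/B)}$ for $m\neq0$ and $Q(0)=\tfrac{2B^2+1}{3B}$. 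Using $\cos(2\pi m/B)=1-2\sin^2(\pi m/B)$ then turns the scalar condition into the elementary inequality $(2B^2+7)\sin^2(\pi m/B)>9$; since its left-hand side is minimized over $m\in\{1,\dots,B-1\}$ at $m=1$, and the resulting function of $B$ takes its smallest value $15>9$ at $B=2$, the inequality holds for all $B\geq2$ with room to spare. The delicate bookkeeping lives entirely in the closed-form evaluation of the root-of-unity sums; once $Q(m)$ is in hand, the conclusion is immediate.
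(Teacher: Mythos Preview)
Your proposal is correct and follows the same high-level route as the paper: plug $\vecx(\omega)=\bF_B\vecy(\omega)$ into the two conditions of \fref{lem:ca_opt} and verify them for every $(i,k)$ using the orthogonality rule $\Ex{\Omega}{e^{j\Omega p}}=\delta[p]$. The paper's own proof gives no details beyond ``a sequence of tedious algebraic simplifications,'' so your write-up is in fact strictly more informative.

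Your intermediate identities all check out. For the first-order condition, the regrouping into $2\Re\{x_ix_k[(x_i^*)^2-(x_k^*)^2]\}$ is correct, and the two expectations indeed differ only by the sign of the phase, so their difference is purely imaginary. For the second-order condition, the reduction to $Q(m)(2+\cos(2\pi m/B))<Q(0)$ via $\Ex{\Omega}{x_k^2(x_i^*)^2}=e^{-j2\pi m/B}Q(m)$ is right, and the closed forms $Q(m)=1/(B\sin^2(\pi m/B))$ and $Q(0)=(2B^2+1)/(3B)$ follow exactly as you outline (fold the triangular-squared sequence modulo $B$, then evaluate $\sum_r r\zeta^r$ and $\sum_r r^2\zeta^r$ at a primitive root of unity). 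The final inequality $(2B^2+7)\sin^2(\pi/B)>9$ holds for all integers $B\ge 2$; the only thing to tighten is the one-line claim that the minimum over $B$ occurs at $B=2$---this is true, but you may want to add a short monotonicity or bounding argument rather than assert it.
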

\begin{proof}

Let $\vecx(\omega)= \bF_B \vecy(\omega) =\bF_B  e^{j\omega\vecb}$. Then 
\begin{align}
x_k &= \textstyle \sum_{n=0}^{B-1} e^{j\omega n} e^{-j\frac{2\pi n}{B}k} = \sum_{n=0}^{B-1} e^{j(\omega -\frac{2\pi }{B}k)n} \label{eq:xk}
\end{align}
Inserting $x_k$ from~\fref{eq:xk} into~\fref{eq:first_deriv} and~\fref{eq:second_deriv} followed by a sequence of tedious algebraic simplifications give (i) $\pder{\int_0^{2\pi}\vecnorm{\bG(i,k,\alpha)\bF_B\vecy(\omega)}_4^4\td\omega}{\alpha}\big|_{\alpha=0} = 0$ and (ii) $\pder{^2\int_0^{2\pi}\vecnorm{\bG(i,k,\alpha)\bF_B\vecy(\omega)}_4^4\td\omega}{\alpha^2}\big|_{\alpha=0} < 0 $.
Therefore, by~\fref{lem:ca_opt}, the DFT matrix $\bF_B$ is a local maximum of~\fref{alg:ca}.
\end{proof}
Although we do not know whether the DFT is a global maximum of the CA algorithm, we have observed no solution, upon perturbations and random initialization, that reaches a higher objective in~\fref{eq:gint} for the stochastic data model $\vecy(\Omega)$ in~\fref{eq:ejw}. 
Establishing global optimality is left for future work.  
We would like to remark that the optimality of the DFT is not obvious, since we will show in~\cite{journaltobe} that any known type of the discrete cosine transform (DCT) is not $\ell^4$-norm optimal.
% !TEX root = 21ISIT.tex
% DO NOT REMOVE THE ABOVE COMMENT!
\graphicspath{ {./figures/} }
\newcommand{\figsize}{0.42}

\section{Numerical Results}
\label{sec:results}

We now utilize~\fref{alg:msp} to learn a beamspace transform from a given,  finite-size data set of observation samples for two experiments: (i) a synthetic system model and (ii) a real-world example. %$\bY\in\opC^{N\times M}$ consisting of $M$ observation samples. 
We emphasize that in these cases, optimality of the DFT for sparsifying channel vectors is no longer guaranteed. 
We also note that for a finite-size data set, our MSP algorithm is essentially equivalent to $\vecy$ having a uniform probability mass function (PMF) over the given samples, which reduces our method to the complex equivalent of~\cite[Algorithm 2]{zhai20complete}.

In both of our experiments, we split the channel matrices into training and test sets; we use the columns of the training set to learn a beamspace transform, then we measure the $\ell^4$-norm and simulate the uncoded bit error rate (BER) with respect to signal-to-noise ratio (SNR) on the test set.
In our simulations, we use the following sparsity exploiting algorithms: (i) Beamspace channel estimation (BEACHES)  from~\cite{mirfarshbafan19a} and (ii) the beamspace largest-entry (LE) data detector from~\cite{seyedicassp20} with density coefficient $0.125$, which operates on a $(B/8)\times U$-sized channel matrix by picking the largest entries to reduce complexity. 
As a baseline method, we also include an antenna-domain linear least-squares minimum mean-square error (LMMSE) data detector combined with BEACHES.

\begin{figure}[tp]
\centering
\includegraphics[width=.72\columnwidth]{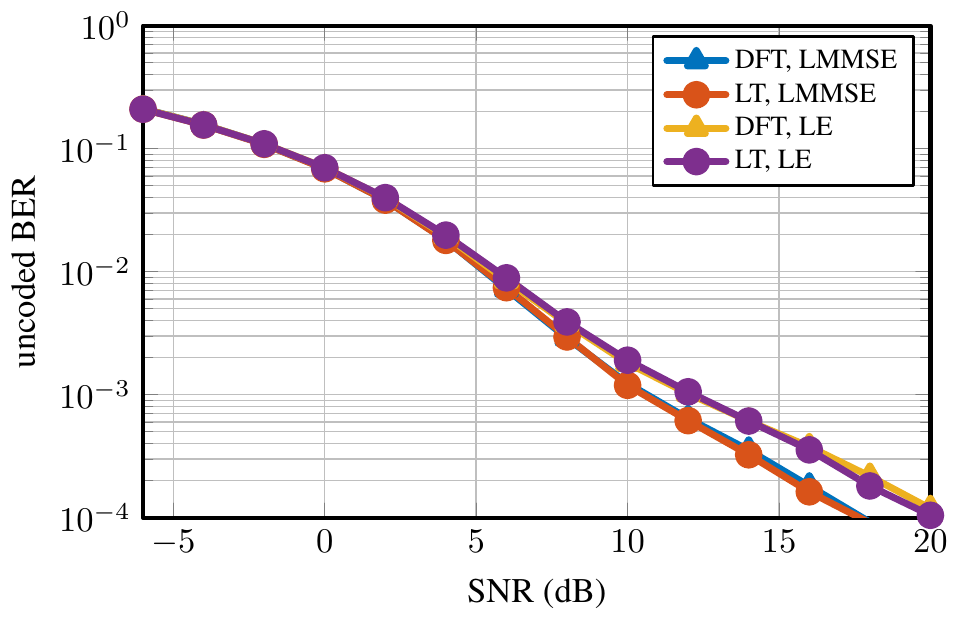}
\vspace{-0.4cm}
\caption{Uncoded BER for beamspace processing with the DFT or learned transform (LT) using synthetic QuaDRiGa channels with LMMSE and  the sparsity-exploiting LE detector, $B=256$ BS antennas, and $U=16$ UEs.}
\label{fig:ber_quad}
\vspace{-0.3cm}
\end{figure}
\begin{figure}
\centering
\includegraphics[width=.72\columnwidth]{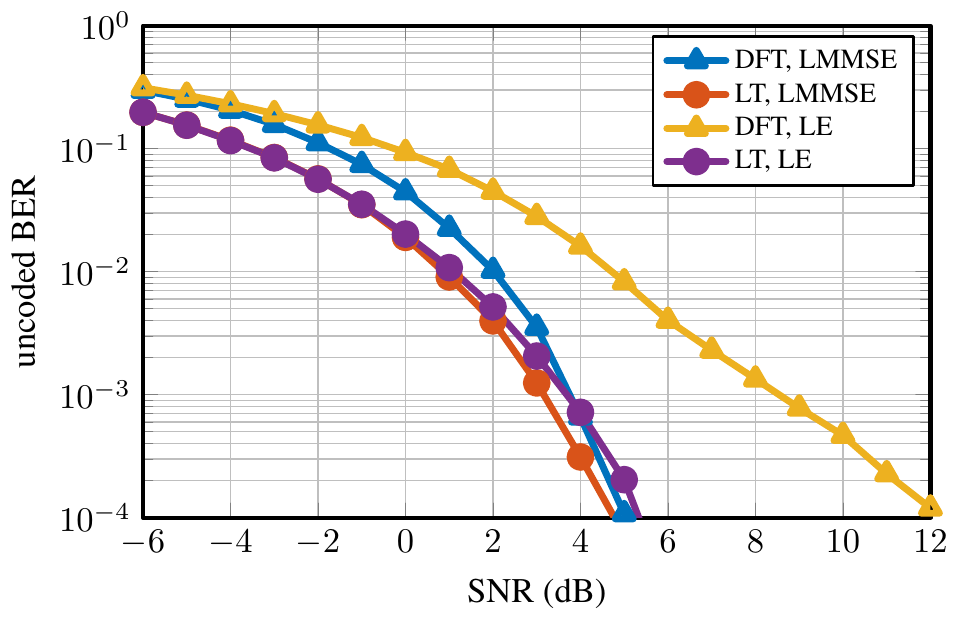}
\vspace{-0.4cm}
\caption{Uncoded BER  for beamspace processing with the DFT or learned transform (LT) using measured channel vectors~\cite{gauger2020massive} with LMMSE and the LE detector, $B=64$ BS antennas (with eight malfunctioning), and $U=1$ UE.}
\label{fig:ber_ctw}
\vspace{-0.1cm}
\end{figure}

\subsection{Synthetic Channel Vectors}

We first simulate line-of-sight (LoS) channel conditions using the QuaDRiGa mmMAGIC UMi model~\cite{QuaDRiGa}, which includes  multipath scattering,  at a carrier frequency of 60\,GHz with a ULA having $\lambda/2$ antenna spacing.
We generate channel matrices for a mmWave massive MIMO system with $B=256$ BS antennas and $U = 16$ single-antenna UEs.
The UEs are placed randomly in a $120^\circ$ circular sector around the BS between a distance of 10\,m and 110\,m, and we assume a minimum UE separation of  $1^\circ$.
We add BS-side power control so that the UE with highest received power has at most $6$\,dB more power than the weakest UE.
We show the BER results for this channel vector set using the DFT and the learned transform (LT) in~\fref{fig:ber_quad}. 
We observe that the LT has only a slight advantage in BER compared to the DFT under the same detector; this ``advantage'' is due to the sparsity-exploiting channel estimation for slightly more sparse channels in beamspace---here, the $\ell^4$-norm of the test set in beamspace domain with LT was only $18\%$ higher than that of the DFT, which can be interpreted as only an approximately $2\%$ higher magnitude in the signal's peaks. 
This result demonstrates that the DFT is (i) no longer optimal but (ii) remains to be an excellent sparsifier for simulated mmWave LoS channels with multipath components. Consequently, it is not worth learning another beamspace transform, which is in congruence with our proof of optimality of DFT for the simple model used in~\fref{eq:ejw}.

\subsection{Real-World Measured Vectors}
We now show results for \emph{measured} channel vectors provided for the IEEE Communications Theory Workshop Localization Competition~\cite{gauger2020massive}.
These channel measurements are based on single-UE transmission to a BS with an $8\times 8$ square antenna array with $\lambda/2$ spacing  at a carrier frequency of 1.27\,GHz. Eight BS antennas were malfunctioning and their output was excluded from the dataset. 
For beamspace processing with rectangular arrays, one would typically apply a two-dimensional DFT on this data as follows: 
Zero-pad for the malfunctioning antennas, vectorize the data to have vectors of size $64$, then multiply with $\bF_8\otimes\bF_8$.
Instead, we use these 64-sized vectors as a training set to learn a beamspace transform.
We show the BER results for this channel vector set using the DFT and the learned transform (LT) in~\fref{fig:ber_ctw}. 
For the LMMSE detector, we observe that the LT can achieve a target BER of $0.1$\% with approximately 1\,dB smaller SNR than the DFT as a result of the sparsity-exploiting channel estimation---here, the $\ell^4$-norm of the test set in the beamspace domain with the LT was up to $4\times$ higher than that of the DFT.
However, for the sparsity-exploiting LE detector, we observe that the LT can achieve the same target BER with approximately 5\,dB smaller SNR than the DFT, allowing the performance of LE to be comparable to antenna-domain LMMSE, as a result of the enhanced sparsity.
This result demonstrates that, although the DFT is well-suited for beamspace processing under idealistic LoS channel conditions, learning new beamspace transforms enables significant improvements for real-world channels and communication systems that suffer from hardware impairments.

% !TEX root = 21ISIT.tex
% DO NOT REMOVE THE ABOVE COMMENT!

\section{Conclusion}
\label{sec:conclusion}

In this paper, we have formulated an optimization problem to learn unitary dictionaries for a complex stochastic model by generalizing the real-valued dictionary learning problem in~\cite{zhai20complete}.
We have proposed two algorithms for this optimization problem: (i) a projected gradient ascent-based algorithm adapted from~\cite[Algorithm 2]{zhai20complete} and a novel coordinate ascent algorithm that avoids projection onto the unitary group. 
We have used the latter algorithm to establish local optimality of the DFT for a free-space mmWave/THz LoS channel model. 
We have used synthetic results to demonstrate that the DFT performs well for idealistic mmWave channel models, but can be improved significantly for real-world measurements with non-ideal hardware using a learned beamspace transform. 

We will show more aspects of our derivations and results in the journal version of this paper~\cite{journaltobe}.
Although our focus was on mmWave and THz communication systems, our algorithms are applicable to more general dictionary learning problems, potentially including other stochastic data models or optimality claims, which leads to many avenues for future work.

%%%%% 
\let\oldthebibliography\thebibliography
\let\endoldthebibliography\endthebibliography
\renewenvironment{thebibliography}[1]{
  \begin{oldthebibliography}{#1}
    \setlength{\itemsep}{0.30em}
    \setlength{\parskip}{0em}
}
{
  \end{oldthebibliography}
}
	 
\balance
\bibliographystyle{IEEEtran}

\vspace{2cm}

\linespread{1}

\bibliography{bib/IEEEabrv,bib/confs-jrnls,bib/publishers,bib/REFs,bibfile,vipbib}

\balance
	
\end{document}